\newcommand{\version}{March 27, 2014}
\newtheorem{thm}{Theorem}
\newtheorem{prop}{Proposition}
\theoremstyle{definition}
\newcommand{\be}{\begin{equation}}
\newcommand{\ee}{\end{equation}}
\newcommand{\beq}{\begin{equation}}
\newcommand{\eeq}{\end{equation}}
\newcommand{\bea}{\begin{eqnarray}}
\newcommand{\eea}{\end{eqnarray}}
\newcommand{\beqa}{\begin{eqnarray}}
\newcommand{\eeqa}{\end{eqnarray}}
\newcommand{\eps}{\varepsilon}
\def\mfr#1/#2{\hbox{$\frac{{#1} }{ {#2}}$}}
\newcommand{\simt}{\mathrel{\rlap{\hbox{$\sim$}}\raise.9ex\hbox{{\tiny
$\,$T}}}}
\newcommand{\sima}{\mathrel{\rlap{\hbox{$\sim$}}\raise.95ex\hbox{{\tiny
$\,$A}}}}
\begin{document}

\title{Entropy Meters and the  Entropy of Non-extensive Systems}
%\vskip2truecm
\author{Elliott H. Lieb${}^{1}$ and Jakob Yngvason${}^{2,3}$\\
\normalsize\it ${}^{1}$ Depts. of Mathematics and Physics, Princeton
University, Princeton, NJ08544, USA\\
\normalsize\it ${}^{2}$ Faculty of Physics, University of Vienna, Austria.\\
\normalsize\it ${}^{3}$ Erwin Schr{\"o}dinger Institute for Mathematical Physics, Vienna, Austria.}
\date{\version}
\maketitle 
\begin{abstract}

In our derivation of the second law of thermodynamics from the relation of
adiabatic accessibility of equilibrium states we stressed the importance of
being able to scale a system's size without changing its intrinsic
properties. This leaves open the question of defining the entropy of 
macroscopic, but unscalable systems, such as gravitating bodies or systems
where surface effects are important. We show here how the 
problem can be overcome,  in principle, with the aid of an `entropy meter'. 
An entropy meter can also be used to determine  entropy
functions for  non-equilibrium states and mesoscopic systems. 
\end{abstract}

\section{Introduction}
In our previous work \cite{LY1}--\cite{LY5} (see
also \cite{T}) we
showed how to define the
entropy of `normal systems' in equilibrium
that are scalable, and showed that this entropy is essentially unique. It
was derived without introducing the concepts of heat or temperature, and was
based solely on the notion of {\it adiabatic accessibility} and
comparability of states with respect to this relation. In a word, the
entropy of a system was defined by letting scaled copies of a system
act on each other via an {\it adiabatic process}. This procedure is
obviously not
appropriate for systems that cannot be divided into parts that have 
intrinsic properties identical to those of  a  larger system.

Here, instead, we propose to use a normal system (defined at the end of
Section \ref{two}) for which the entropy
has already been established, as an `entropy meter' by letting it act, in an
adiabatic process,
on a system whose entropy is to be determined. 
The standard way to measure entropy, as
illustrated e.g., by the `entropy meter' in \cite[pp.35-36]{A},
presupposes that the system to be measured has a well defined entropy
and, more importantly, assumes that it has a definite absolute temperature
uniformly throughout the system. The definition of temperature in a
non-normal system is not at all obvious. Our entropy meter assumes none of
these things and is based, instead, on the relation of adiabatic
accessibility,  as in \cite{LY2}. R. Giles's work \cite{G} is a precursor of
ours, as we stated in \cite{LY2}, 
but his definition
of entropy for general systems, while similar in spirit, is not the same as
the one described here or in \cite{LY5}. 
Another step in the direction of a definition of entropy for general systems has been taken by J.P. Badiali and 
A. El Kaabouchi \cite{B} who consider systems having scaling properties with fractional exponents and satisfying modifications of the axioms of \cite{LY2}.

{\it Comment: } The word `meter' as used in our paper is 
a bit unusual in the sense that the measurement involves changes in the
system to be measured, whereas a `meter' is normally thought to be 
best if it interacts least. However, any physical measurement of entropy,
for any kind of system, 
requires a 
state change, e.g., integration of $\delta Q / T$. Practically speaking,
changing the state of the sun is out of bounds, but there are many human
sized, non-scalable and non-equilibrium systems that need to be considered,
e.g., systems with sizeable surface contributions to the entropy. 

Our motivation 
%in all this 
is to identify entropy as a quantity that
allows us to decide which states can be transformed, adiabatically, into
which other states. Here we recall that an {\it adiabatic process} for us is
a `work process' \cite{GB, BZ} and does {\it not} require adiabatic enclosures or
slow motion or any such constraints. We do not want to introduce heat or temperature
{\em ab initio}, and thus require {\it only} that changes 
in an adiabatic process leave no mark on the 
universe other than the raising/lowering of a weight or
stretching/compressing a spring.

Our definition of entropy is presented in the next three sections for three classes of systems. In each section we define two entropy functions, denoted by $S_-$ and $S_+$, which are determined by a double variational principle. The definitions of these functions are illustrated by Figures 1, 2 and 3. An essentially unique entropy characterizing the relation $\prec$ exists if and only if these two functions are equal and this, in turn, is equivalent to the condition of {\it comparability} of the states under consideration. This comparability, which is a highly nontrivial property, was established for normal, scalable systems (called `simple systems') in \cite{LY2} by using certain structural properties of the states of normal systems that are physically motivated, but go way beyond other much simpler and almost self-evident order-theoretical assumptions about the relation $\prec$ that were the only properties used in the first part of our paper. In \cite{LY5} we argued that comparability can generally not be expected to hold for non-equilibrium states. In Section 4, where we use entropy meters to construct entropy for general, non-scalable systems, we {\it assume} comparability, but show also that should comparability not hold, the two different functions $S_\pm$ nevertheless still encode physically useful information about the relation of adiabatic accessibility.

Since our definition of entropy (or entropies) uses {\it only} the relation $\prec$ and its properties, it can be used in any situation where such a relation is given. Hence our definitions are, in principle, applicable also to mesoscopic systems and to non-equilibrium states. For the latter it provides an alternative route to the method of \cite{LY5} which is sketched in Section 3. Concerning mesoscopic systems it can be expected that the relation $\prec$, and hence the Second Law,  becomes increasingly ``fuzzy'' when the size of the system approaches atomic dimensions and the possibility of quantum entanglement between a system and its surroundings has to be taken into account. (See, e.g., \cite{Br, EDRV,  Ho}.) In such extreme situations our framework will eventually cease to apply, but there is still a wide intermediate range of sizes above atomic scales where a non-extensive entropy in the sense of the present paper may be a useful concept.

A final point to mention is that direct applications of the formulas (1)-(2), (3)-(4) or (6)-(7) may not be the most convenient  way to determine entropy  in practice, although we have shown that it is possible in principle. The {\it existence} of entropy is still a valuable piece of information, and in the cases when we have shown {\it uniqueness} we can be sure that more conventional methods, based, e.g., on  measurements of  heat capacities, compressibilities etc., will give the {\it same} result. This applies also to definitions based on formulas from statistical mechanics,  {\it provided} these can be shown to characterize the relation $\prec$. Note that although the entropy as defined in the present paper need not be extensive (the concept of `scaling' may not be applicable), it is still {\it additive} upon composition of states in cases where the comparability property holds, according
to Theorem 1 below. Additivity is not always fulfilled for entropy functions that have been proposed as generalization of the Boltzmann-Gibbs entropy in statistical physics, as, e.g., in \cite{TS} and  \cite{Ga}. A relation between states in thermodynamics characterized by such an entropy can therefore not be the same as the one considered in the present paper.

\section{Basic Definition of Entropy} \label{two}
 
We start with a very brief outline of our definition of entropy for normal
systems in \cite{LY2}. See \cite[Section 2]{LY5} for a concise summary. The
set of equilibrium states of a system of a definite amount of matter is
denoted by $\Gamma$. It is not necessary to parametrize the points of
$\Gamma$ with energy, volume, etc. for our purposes here, although we do so
in \cite{LY2} in order to derive 
other thermodynamic properties of the system, specifically temperature. 

If $X$ and $Y$ are points in two (same or different) state spaces
we write $X\prec Y$ (read `$X$ precedes $Y$') if it is possible to change $X$ to $Y$ by an adiabatic
process in the sense above. We write $X\prec \prec Y$ (read `$X$ strictly precedes $Y$') if 
$X\prec Y$ but {\it not} $Y\prec X$,  and we write $X\sima Y$ (`$X$ is adiabatically equivalent to $Y$') if $X\prec Y$ {\it and}  $Y\prec
X$. 

We say that $X$ and $Y$ are (adiabatically) {\it comparable} if $X\prec
Y$ or $Y\prec X$ holds. 

Another needed concept is the composition, or
product, of two
state spaces $\Gamma_1 \times \Gamma_2$, an element of which is simply a 
pair of states denoted $(X_1, \, X_2)$ with $X_i \in \Gamma_i$. We can
think of this product space as two macroscopic objects lying side by side
on the laboratory table, if they are not too large. 
Finally, there is the scaling of states by a real number $\lambda$,
denoted by $\lambda X$. The physical interpretation (that is, however, not needed for the mathematical proofs) is that extensive state variables like the amount of substance, energy, volume and other `work coordinates' are multiplied by $\lambda$ while intensive quantities like specific volume, pressure and temperature are unchanged.

Logic requires that we introduce a {\it `cancellation law'} into the
formalism:
\begin{itemize}
\item{\it  If $(X_1, X_2) \prec (X_1, Y_2)$ then $X_2 \prec
Y_2$.} 
\end{itemize}
\noindent In
\cite{LY2} we proved this from a stability axiom, but we can remark that it
is not really necessary to prove it since the law says that we can go from 
$X_2$ to $Y_2$ without changing the rest of the universe, which is the
definition of 
 $\prec $ in $\Gamma_2$. (See \cite[pp.22-23]{LY2} for a further discussion of this point.)

To define the entropy function on $\Gamma$
we pick two {reference points} {$X_0\prec\prec X_1$} in $\Gamma$.
Suppose  $X$ is an arbitrary state with {$X_0 \prec X\prec  X_1$} (If
$X\prec X_0$, or $X_1\prec X$, we interchange the roles of $X$ and $X_0$, or
$X_1$ and $X$, respectively.) From the assumptions about the relation
$\prec$ in \cite{LY2}, we proved that the following two functions are equal:
\beq\label{entropydef}
{S_-(X)=\sup\{\lambda'\,:\,
((1-\lambda')X_0,\lambda' X_1)\prec X\}},\eeq
\beq\label{entropydef2}
{S_+(X)=\inf\ \{\lambda''\,:\, X\prec
((1-\lambda'')X_0,\lambda'' X_1)\}.}
\eeq
Moreover, there is a $\lambda_X$ such that the $\sup$ and $\inf$ are
attained at $\lambda_X$.\footnote{If $X_1\prec\prec  X$,
then $((1-\lambda)X_0,\lambda X_1)\prec X$ has the meaning $\lambda
X_1\prec((\lambda-1)X_0, X)$ and the entropy exceeds 1. Likewise, it means
that $(1-\lambda) X_0 \prec (-\lambda X_1, X)$ if
$X\prec\prec  X_0$. See \cite{LY2}, pp.\ 27--28. }

This central theorem in \cite{LY2} provides a definition of entropy by means of a double variational principle.  An
essential ingredient for the proof that $S_-(X)=S_+(X)$ for all $X$ is the {\it comparison property} (CP):
\begin{itemize}
\item{\it Any two states in the collection of state spaces $(1-\lambda)\Gamma\times \lambda \Gamma$
with $0\leq \lambda\leq 1$ are
adiabatically
comparable.}\footnote{For $\lambda=0$ or 1 the space is simply $\Gamma$, by definition.} 
\end{itemize}
\noindent The common value $\lambda_X = S_-(X) =
S_+(X)$ is, {\it by definition,} the entropy $S(X)$ of $X$.
%%%%%%%%%%%%%%%%%
\begin{figure}
\centering
\begin{subfigure}{.5\textwidth}
  \centering
  \includegraphics[width=6cm]{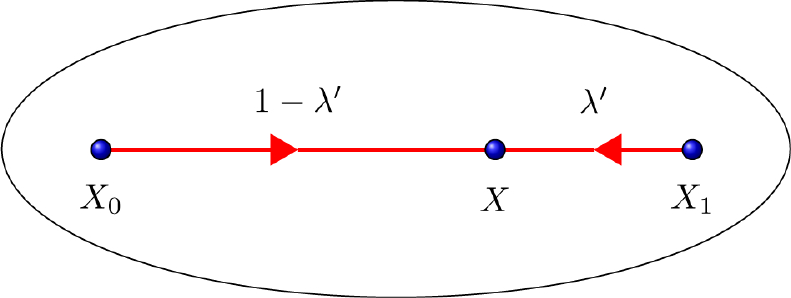}
 % \caption{A subfigure}
 % \label{fig3:sub1}
\end{subfigure}%
\begin{subfigure}{.5\textwidth}
  \centering
  \includegraphics[width=6cm]{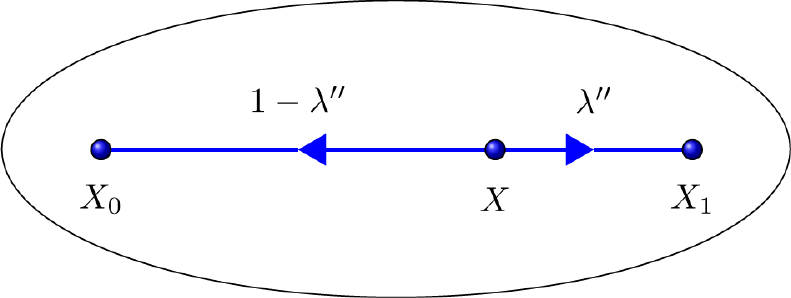}
  % \caption{A subfigure}
 % \label{fig3:sub2}
\end{subfigure}
\caption{\small Definition of entropy for scalable systems, cf. Eqs. \eqref{entropydef} and \eqref{entropydef2}. The left figure illustrates the processes employed for definition of $S_-$, the right figure the analogous processes for $S_+$.}
\label{fig1}

\end{figure}

{\it Definition of a} {\bf normal system.} In our original paper \cite{LY2} 
we said that 
`simple systems' are the building blocks of thermodynamic systems and we
used them to prove the comparison property CP. In our work on
non-equilibrium systems \cite{LY5} we did not make use of simple systems
but we did assume, unstated, a property of such systems. Namely, that the
range of the entropy is a connected set. That is if $X,\, Y \in \Gamma$ and
$S(X) < S(Y) $ then, for every value $\lambda $ in the interval $ [S(X),
S(Y) ]$ there is a $Z
\in \Gamma$ such that 
$S(Z) = \lambda $. This property
will  be assumed here as part
of the definition of `normal systems'. The other assumptions have already
been stated, that is, the existence of an essentially unique additive
and extensive  entropy  function that characterizes the relation
$\prec$ on the state space $\Gamma$.

\section{Entropy for Non-equilibrium States of a Normal System}

In the paper \cite{LY5} we discussed the possibility of extending our
definition of entropy to  non-equilibrium states. The setting was as
follows: We assume that the space of non-equilibrium states
$\hat\Gamma$ contains a subspace of equilibrium states for which an
entropy function $S$ can be determined in the manner described above.
Moreover, we assume that the relation $\prec$ extends to $\hat\Gamma$ and 
ask for the possible extensions of the entropy from $
\Gamma $ to $\hat\Gamma$. The concept of scaling and splitting is
generally not available for $\hat\Gamma$, so that we cannot define
the entropy by means of the formulas \eqref{entropydef} and 
\eqref{entropydef2}. Instead, we
made the following 
assumption: 
\begin{itemize}
\item {\it For every $X\in \hat \Gamma$ there are $X', X''\in
\Gamma$ such that $X'\prec X\prec X''$.}
\end{itemize}
%%%%%%%%%%%%%%%%
\begin{figure}[htf]
\center
\includegraphics[width=6cm]{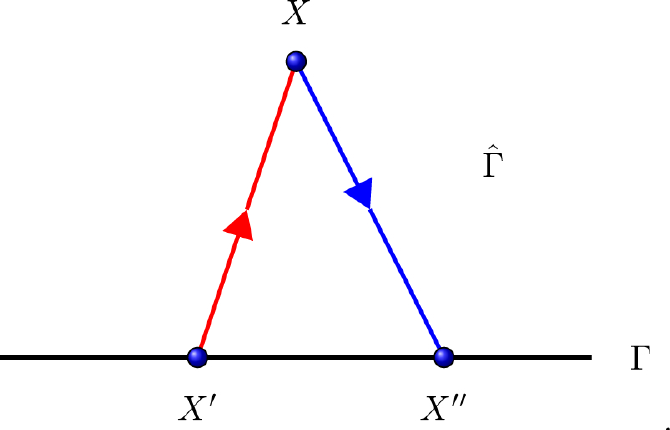}
\caption{\small The picture illustrates the definition of the entropies $S_-$, and $S_+$ for non equilibrium states of a normal system, cf. Eqs. \eqref{14} and \eqref{15}.  The space of non equilibrium states is denoted by $\hat\Gamma$  while $\Gamma$ is the subset of equilibrium states.}
\label{fig2}
\end{figure}
%%%%%%%%%%%%%%%%%%
\noindent We then define two entropies 
for $X\in\hat\Gamma$:
%{\begin{eqnarray}S_-(X)&:=&\sup\{S(X')\,:\, X'\in \Gamma, X'\prec X\}\\
%S_+(X)&:=&\inf\ \{S(X'')\,:\, X''\in \Gamma, X\hat \prec X''\}
%\end{eqnarray}}
\begin{equation}\label{14}
S_-(X)=\sup\{S(X')\,:\,
X'\in \Gamma, X'\prec X\}\ , 
\end{equation}
\begin{equation}\label{15}
 S_+(X)=\inf\{S(X'')\,:\,
X''\in \Gamma, X \prec X'' \}\,\, .
\end{equation}

These two functions coincide if and only if all states in $\hat\Gamma$ are
adiabatically comparable, in which case an essentially unique entropy
$S=S_-=S_+$
characterizes the relation $\prec$ on $\hat \Gamma$ in the sense that 
$X\prec Y$ if and only if $S(X)\leq S(Y)$. Whereas comparability for
equilibrium states is provable from plausible physical assumptions, however,
it is highly implausible that it holds generally for non-equilibrium states
apart from special cases, e.g. when there is local equilibrium. (See the
discussion in \cite[Section 3(c)]{LY5}.)
The functions $S_- $ and $S_+$ contain useful information, nevertheless,
because  both are monotone with respect to 
$\prec$ and every function with that property lies between $S_- $ and
$S_+$. 

\section{General Entropy Definition for Non-extensive Systems}

Our entropy meter will be a normal state space $\Gamma_0$ consisting of
equilibrium states, as in Section 2, 
with an entropy function $S$
characterizing the relation $\prec$ on this space and its scaled products.
Suppose $\prec$ is also defined on another state space
$\Gamma$,  as well as on the product of this space and
$\Gamma_0$, i.e., the space $\Gamma \times \Gamma_0$.  On such product states
the relation $\prec$ is assumed to satisfy only {\it some}  of the assumptions
that a normal space would satisfy. In the notation of \cite{LY2} these are
\begin{itemize}
{\it \item (A1) Reflexivity: $X\sima X$
\item (A2) Transitivity: $X\prec Y$ and $Y\prec Z$ implies $X\prec Z$
\item (A3) Consistency: If $X\prec X'$ and $Y\prec Y'$, then $(X,Y)\prec
(X',Y')$.
\item (A6) Stability with respect to $\Gamma_0$: If $(X,\eps Z_0)\prec
(Y,\eps Z_1)$ with $Z_0,Z_1\in\Gamma_0$ and a sequence of $\eps$'s tending
to zero, then $X\prec Y$}
\end{itemize}
Note that A4 (scaling) and A5 (splitting and recombination) are {\it not
required } for (product) states involving $\Gamma$ because the operation of
scaling need not be defined on $\Gamma$. We now pick two {\it reference
states}, $Z_0\in\Gamma_0$ and $X_1\in \Gamma$, and make the following
additional  assumption.
\begin{itemize}
\item {\it (B1) For every $X\in \Gamma$ there are $Z', Z''\in \Gamma_0$ such
that
\beq  \label{b1}
(X_1,Z')\prec (X,Z_0)\prec (X_1,Z'')
\eeq}
\end{itemize}
We use $\Gamma_0$ as an `entropy meter' to define two functions on $\Gamma$:
\beq \label{minus}
S_-(X)=\sup\{S(Z')\,:\, (X_1,Z')\prec (X,Z_0)\}
\eeq 
\beq \label{plus}
S_+(X)=\inf\{S(Z'')\,:\, (X,Z_0)\prec (X_1,Z''))\}.
\eeq
If $S_+(X) = S_-(X)$ we denote the common value by $S(X)$. 
Theorem \ref{main} will show that this is the case under a suitable
hypothesis and that $S$ has the required properties of an entropy
function. %\newpage

\noindent{\bf Remarks.}

1. The definition of $S_\pm$ is similar the one used in the proof of Theorem
2.5 in \cite{LY2} for the calibration of the multiplicative entropy
constants in products of `simple systems'.

2. The functions  defined in \eqref{minus} and \eqref{plus} give a
definition of the the upper/lower entropies of non-equilibrium 
states different from the definition given in \cite{LY5}, cf. Eqs. \eqref{14} and \eqref{15} above. Numerically, they
are identical up to additive constants, however, when both definitions apply.

\begin{figure}
\centering
\begin{subfigure}{.5\textwidth}
  \centering
  \includegraphics[width=6cm]{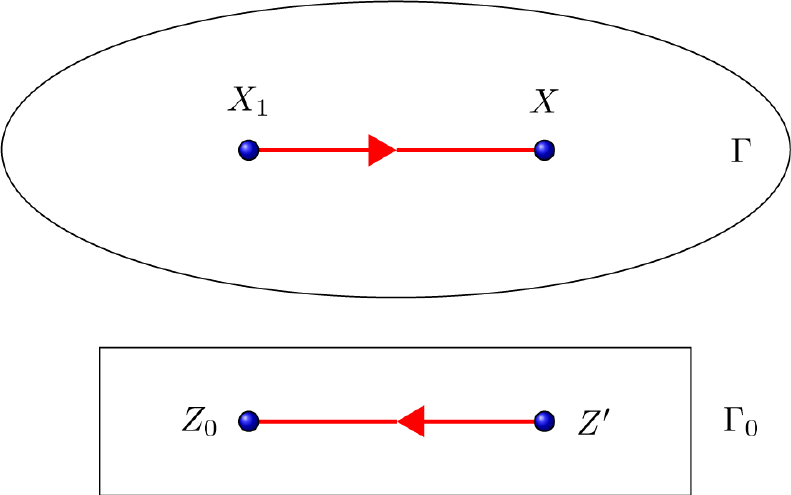}
 % \caption{A subfigure}
 % \label{fig3:sub1}
\end{subfigure}%
\begin{subfigure}{.5\textwidth}
  \centering
  \includegraphics[width=6cm]{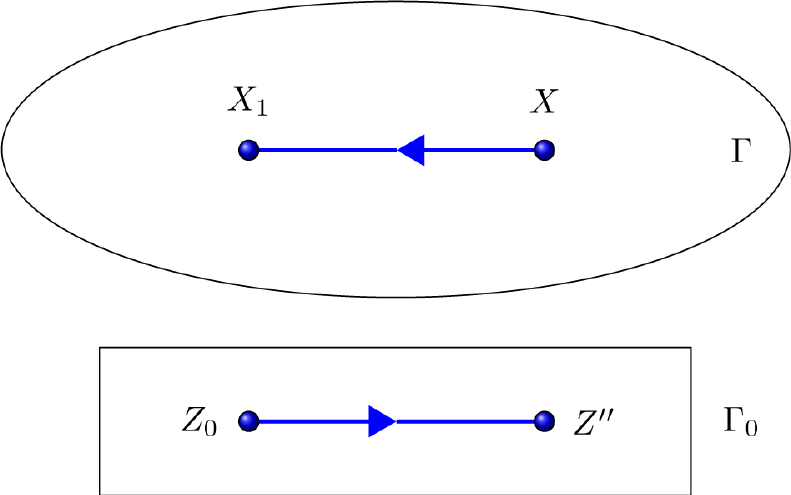}
  % \caption{A subfigure}
 % \label{fig3:sub2}
\end{subfigure}
\caption{\small The processes used to define entropy for a system $\Gamma$ with the aid of an entropy meter, $\Gamma_0$. The left figure illustrates the definition of $S_-$ (Eq. (\ref{minus})), the right figure that of $S_+$ (Eq. (\ref{plus}).}
\label{fig3}
\end{figure}

3. Assumption (B1) may appear to be rather strong
because  when the $\Gamma$ system is large compared to the $\Gamma_0$
entropy meter then (\ref{b1}) essentially says that the small system can
move the large one from $X_1 $ to $X $ and from $X$ to $X_1$. In such a case
this can only be expected to hold if
$X$ and $X_1$ are close together. To overcome this difficulty we introduce 
`charts', as we do in differential geometry. The state space $\Gamma$
is broken into small, overlapping subregions and our Theorem
\ref{main} (with the same $\Gamma_0$ if desired) is applied to each
subregion. The saving point is that the entropy in each subregion is unique
up to an arbitrary additive constant, which means that 
the entropies in two overlapping subregions must agree up to a constant.

Can we fix an additive constant in each subregion so that every overlap
region 
has the same entropy? In principle, one could imagine an inconsistency in
the additive constants as we go along a chain of overlapping subregions.
A way to negate this possibility is to note that if one can define a global
entropy function then the mismatch along a closed loop cannot happen.
A global entropy can be constructed, in principle, however, by starting
with a sufficiently large scale copy of $\Gamma_0$, which might not be
practical physically, but which exists in principle since $\Gamma_0$ is
supposed to be scalable. With this large copy only one chart is needed and, therefore, 
the entropy exists globally. 

Our main new result is the following, which shows that $\Gamma_0$
can be used to determine, essentially uniquely, an entropy function on
the non-extensive system $\Gamma$. More generally, we can consider a
product  $\Gamma_1\times \Gamma_2 \times \dots \times \Gamma_n$ of such
non-extensive systems. 
\begin{thm}  \label{main}
Let us assume, in addition to the conditions above, 
\begin{itemize}
\item (B2) Comparability: Every state in any multiple-product  of
the spaces under consideration
is comparable to
every other state in the same multiple-product space.
\end{itemize}
Then $S_-=S_+$ and this function, denoted again by $S$, is an entropy on
$\Gamma$  in the sense that $X\prec Y$ if and only of $S(X)\leq S(Y)$.  A
change of $Z_0$ or $X_1$ amounts to a change of $S$ by an additive constant.

The entropy is additive in the sense that the function defined by $S(X,Y)=S(X)+S(Y)$, with
$X,Y\in\Gamma$, is an entropy on $\Gamma\times \Gamma$, and likewise
$S(X,Z)=S(X)+S(Z)$ with $X\in\Gamma$, $Z\in\Gamma_0$, is an entropy on
$\Gamma\times\Gamma_0$. More generally, the entropy is additive on a product
of systems
$\Gamma_1\times \Gamma_2 \times \dots \times \Gamma_n$, in the sense that
$S(X_1)+S(X_2) + \cdots +S(X_n)$ is an entropy on this space.

Finally, the entropy is
determined uniquely by these properties, up to an arbitrary additive
constant. Its unit of entropy is that of  $\Gamma_0$. 
\end{thm}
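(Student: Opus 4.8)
The plan is to mirror the structure of the proof of the central theorem of \cite{LY2}, but with the scaled copies of $\Gamma$ replaced throughout by states of the meter $\Gamma_0$. First I would establish the three basic properties of the functions $S_\pm$ that hold \emph{without} comparability: (i) $S_-(X)\le S_+(X)$ for every $X$, which follows from transitivity (A2) — if $(X_1,Z')\prec(X,Z_0)$ and $(X,Z_0)\prec(X_1,Z'')$ then $(X_1,Z')\prec(X_1,Z'')$, whence $Z'\prec Z''$ by the cancellation law, so $S(Z')\le S(Z'')$; (ii) both $S_-$ and $S_+$ are monotone with respect to $\prec$ — if $X\prec Y$ then any $Z'$ admissible for $X$ in \eqref{minus} is admissible for $Y$ (by consistency (A3) and transitivity), and dually for $S_+$; (iii) the admissible sets in \eqref{minus} and \eqref{plus} are nonempty by (B1), so $S_\pm$ are finite-valued. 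I would also note here that every monotone function on $\Gamma$ calibrated to agree with $S$ on the meter lies between $S_-$ and $S_+$, which gives the ``useful information'' remark and will feed into uniqueness.

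The heart of the matter is to show that (B2) forces $S_-=S_+$. Fix $X$ and suppose, for contradiction, that $S_-(X)<S_+(X)$; pick a value $s$ with $S_-(X)<s<S_+(X)$, and — using that $\Gamma_0$ is normal, so its entropy range is a connected set — choose $Z_s\in\Gamma_0$ with $S(Z_s)=s$ (after first scaling $\Gamma_0$ if necessary so that $s$ lies in the range; the scaling axioms A4, A5 are available for $\Gamma_0$). By comparability (B2) applied in the product space $\Gamma\times\Gamma_0$, either $(X_1,Z_s)\prec(X,Z_0)$ or $(X,Z_0)\prec(X_1,Z_s)$. In the first case $s=S(Z_s)\le S_-(X)$, contradicting $s>S_-(X)$; in the second case $s=S(Z_s)\ge S_+(X)$, contradicting $s<S_+(X)$. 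Hence $S_-(X)=S_+(X)=:S(X)$. The characterization $X\prec Y\iff S(X)\le S(Y)$ then follows: the forward direction is monotonicity (ii); for the converse, if $S(X)\le S(Y)$, choose (using connectedness of the meter's entropy range again, and scaling) $Z'$ with $S(Z')$ slightly below $S(X)$ so that $(X_1,Z')\prec(X,Z_0)$, and $Z''$ with $S(Z'')$ slightly above $S(Y)\ge S(X)$ — actually the cleanest route is: by (B2), $(X,Z_0)$ and $(X_1,Z_Y)$ are comparable where $S(Z_Y)=S(Y)$; if $(X_1,Z_Y)\prec(X,Z_0)$ then $S(Y)\le S_-(X)=S(X)$ forces $S(X)=S(Y)$ and one argues with a limiting sequence plus stability (A6) to get $X\prec Y$; otherwise $(X,Z_0)\prec(X_1,Z_Y)$, and similarly $(X_1,Z_Y)$ compared against the states witnessing $S_-(Y)=S(Y)$ yields $(X,Z_0)\prec(Y,Z_0)$ after cancelling, hence $X\prec Y$ by the cancellation law. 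The dependence on the reference states $Z_0,X_1$: replacing them shifts every $S(Z')$ in \eqref{minus} by the same additive constant, by additivity of $S$ on $\Gamma_0$ and a short computation chaining the defining relations for the two choices.

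Additivity is handled by observing that $(X,Y)\mapsto S(X)+S(Y)$ is monotone on $\Gamma\times\Gamma$ (consistency (A3) gives one direction; the cancellation law applied twice gives that $(X,Y)\prec(X',Y')$ implies $X\prec X'$ and $Y\prec Y'$ when one of the coordinates is held fixed via an intermediate state — here one uses comparability (B2) on the triple product to compare $(X,Y')$ with $(X',Y')$ and with $(X,Y)$), and that on the meter copies it reduces to the already-known additivity of $S$ on products of normal systems; since the entropy on $\Gamma\times\Gamma$ built from its own meter is unique up to a constant and $S(X)+S(Y)$ is one such monotone calibrated function, they coincide. The $n$-fold case is the same argument carried in the product $\Gamma_1\times\cdots\times\Gamma_n\times\Gamma_0$. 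Uniqueness up to an additive constant follows from property (iii) in the first paragraph combined with $S_-=S_+$: any entropy $S'$ characterizing $\prec$ and calibrated on $\Gamma_0$ is monotone, hence pinched between $S_-$ and $S_+$, hence equal to $S$; a different calibration changes it by a constant.

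\textbf{Main obstacle.} The delicate point is the converse implication $S(X)\le S(Y)\Rightarrow X\prec Y$, and in particular the boundary case $S(X)=S(Y)$, where one cannot simply produce a strictly dominating meter state and must instead extract $X\prec Y$ from a sequence of relations $(X,\eps Z_0)\prec(Y,\eps Z_1)$ and invoke stability (A6) — exactly as the adiabatic equivalence of equal-entropy states is obtained in \cite{LY2}. Keeping careful track of which scaling operations are legitimate (only on $\Gamma_0$, never on $\Gamma$) throughout these limiting arguments is where the proof must be written with most care.
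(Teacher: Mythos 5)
Your intermediate-value argument for $S_-=S_+$ under (B2) is fine, and is in fact a slightly more direct route than the paper's (which first establishes attainment of the sup and inf and then derives a contradiction from a state strictly between $S(Z_X')$ and $S(Z_X'')$). The genuine gap is everything after that: the paper's proof rests on the lemma that the sup in \eqref{minus} and the inf in \eqref{plus} are actually \emph{attained}, yielding a state $Z_X\in\Gamma_0$ with $(X,Z_0)\sima(X_1,Z_X)$ (Eq.~\eqref{aa}). This is proved by a specific stability-plus-scaling argument inside $\Gamma_0$: given near-optimal $Z'_\eps$, one adjoins small scaled meter states $\delta(\eps)Z_1$, $\delta(\eps)Z_2$ with $S(Z_1)>S(Z_2)$ chosen so that the entropies balance exactly, and then invokes (A6) to pass to the limit. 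You correctly identify this limiting mechanism as the ``main obstacle,'' but you never carry it out, and all the remaining claims in the theorem are exactly the ones that need it: the converse implication $S(X)\leq S(Y)\Rightarrow X\prec Y$ (your case analysis in the ``otherwise'' branch does not close, since knowing only $S(Z_Y)=S_-(Y)$ does not make $Z_Y$ admissible for $Y$, and comparing $(Y,Z_0)$ with $(X_1,Z_Y)$ can land you with two relations pointing the same way and no conclusion), the statement about changing $Z_0$ and $X_1$ (the paper chains the three equivalences \eqref{aa} and \eqref{b} and cancels, which requires exact witnesses, not sup/inf approximants), and uniqueness.

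The additivity part of your sketch also contains a step that would fail: $(X,Y)\prec(X',Y')$ does \emph{not} imply $X\prec X'$ and $Y\prec Y'$, even with comparability on triple products, so monotonicity of $(X,Y)\mapsto S(X)+S(Y)$ cannot be obtained that way; and your alternative appeal to ``uniqueness of the entropy on $\Gamma\times\Gamma$ built from its own meter'' presupposes that $S(X)+S(Y)$ is monotone with respect to $\prec$ on $\Gamma\times\Gamma$, which is precisely what is to be proved -- a circularity. The paper avoids all of this by using the equivalences $(X,Z_0)\sima(X_1,Z_X)$ coordinatewise: $(X,Y)\prec(X',Y')$ is equivalent, after adjoining $(Z_0,Z_0)$ and cancelling, to $(Z_X,Z_Y)\prec(Z_{X'},Z_{Y'})$, and then additivity of $S$ on $\Gamma_0\times\Gamma_0$ finishes the argument. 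So the missing idea is not a side issue but the load-bearing lemma: prove attainment (hence \eqref{aa}) first, and the entropy principle, reference-point independence, additivity and uniqueness all follow by cancellation; without it your sketches of these four statements do not go through as written.
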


\begin{proof} {\it STEP 1:} The proof that $S_- = S_+ = S$, and that $S$ is
an entropy is similar to the proof
of Proposition 3.1 in \cite{LY5}.
We start by proving that 
for every $X\in \Gamma$ there is a $Z_X\in \Gamma_0$ such that 
\begin{equation}\label{aa}
 (X,\, Z_0) \sima  (X_1,\, Z_X).
\end{equation}
To prove \eqref{aa} we use the stability assumption (A6) for $\Gamma_0$ to
show that the $\sup$ and $\inf$ in the definitions \eqref{minus} and
\eqref{plus} are attained, that is there are $Z_{X}' $ and $Z_X''$ in 
$\Gamma_0$ such that $S_-(X) = S(Z_X') $ and $S_+(X)= S(Z_X'')$.

Indeed, since $S(Z') \leq S(Z'')$, if $Z'$ and $Z''$ are as in \eqref{b1},
and $\Gamma_0$ is a normal system, there is a $Z_X' \in \Gamma_0 $ such that $
S_-(X) = S(Z_X' )$.  We claim that $(X_1, Z_X') \prec (X,Z_0)$.   By
definition of $S_-(X)$, for every $\varepsilon >0$ there is a $Z_{\varepsilon}'
\in \Gamma_0$ such that $(X_1,Z_\eps')\prec (X,Z_0)$ and $0\leq
S(Z_X')-S(Z_\eps')\leq\eps$. Now pick two states $Z_1,Z_2\in\Gamma_0$ with
$S(Z_1)-S(Z_2)>0$. Then there is a $\delta(\varepsilon)\to 0$ such that
$S(Z_X')+\delta(\varepsilon) S(Z_1)=S(Z_\varepsilon')+\delta(\varepsilon) S(Z_2)$
which means that $(Z_X', \delta(\varepsilon) Z_1)\sima (Z_\varepsilon',
\delta(\varepsilon) Z_2)$. This in turn, implies $(X_1,Z_X',\delta(\varepsilon)
Z_1)\sima (X_1,Z_\varepsilon',\delta(\varepsilon) Z_2)\prec
(X,Z_0,\delta(\varepsilon) Z_2) $ and hence $(X_1,Z_X')\prec  (X,Z_0)$ by
stability. The existence of $Z_X''$ with $S_+(X)=S(Z_X'')$ is shown in the
same way. This establishes the existence of a maximizer in \eqref{minus} and
a minimizer in \eqref{plus}.

If $S_-(X) < S_+(X)$ there is, by the definition of normal systems, a
$\tilde{Z} \in \Gamma_0$ with $S(Z_X') < S(\tilde Z) < S(Z_X'')$. (It is here that we use the assumption of connectivity of the range of $S$.)
By comparability, we have either $(X_1, \tilde{Z })\prec (X,Z_0)$, which
would contradict $S_-(X) = S(Z_X')$ or else we have $(X, Z_0) \prec (X_1,
\tilde Z)$
which would contradict $S_+(X) = S(Z_X'')$.   Hence   
$S_-(X) = S_+(X) = S(X)$.  Either   $ Z_X' $ or $ Z_X''$ can be taken as
$Z_X$. This establishes \eqref{aa}.

Now we take $ X,\  Y \in \Gamma$. We have that  both
$(X,Z_0) \sima (X_1, Z_X)$ and $(Y,Z_0) \sima (X_1, Z_Y)$  hold,  which
implies the following equivalences:
\begin{equation}
X\prec Y \ {\rm if\ and\ only\ if\ } Z_X \prec Z_Y \ {\rm if\ and\ only\ if\
}
S(X) =S(Z_X) \leq S(Z_Y)  = S(Y).
\end{equation}
Therefore, $S$ is an entropy on $\Gamma$. 

\medskip
{\it STEP 2: }
If $\tilde Z_0$ and $\tilde X_1$ are different reference points, then likewise there is
a $\tilde Z_X$ such that 
\beq\label{b} (X,\tilde Z_0)\sima (\tilde X_1,\tilde Z_X ),\eeq
and we denote the corresponding entropy by $\tilde S(X)=S(\tilde Z_X)$.
Now (\ref{aa}) and (\ref{b}) imply
\beq \label{11}
(X_1,Z_X,\tilde Z_0)\sima (X,Z_0,\tilde Z_0)\sima (\tilde X_1,\tilde Z_X, Z_0)\sima
(X_1,Z_{\tilde X_1},\tilde Z_{X}).
\eeq
In the three steps we have used, successively, 
$(X_1,Z_X)\sima (X,Z_0)$, $(X, \tilde Z_0)\sima (\tilde X_1,\tilde Z_X)$ and $(\tilde X_1,Z_0)\sima (X_1,Z_{\tilde X_1})$.
By the cancellation law,  \eqref{11} implies
\beq (Z_X, \tilde Z_0)\sima (Z_{\tilde X_1},\tilde Z_{X})\eeq
which, because $\Gamma_0$ is a normal state space with an additive
entropy, is equivalent to
\beq 
S(X)+S(\tilde Z_0)=S(\tilde X_1)+\tilde S(X).
\eeq

\medskip
{\it STEP 3: }
The proof that $S(X)+S(Y)$ is an entropy on $\Gamma\times \Gamma$ goes as
follows:
$(X,Y)\prec (X',Y')$ is (by A3 and the cancellation property) equivalent to $(X,Y,Z_0,Z_0)\prec
(X',Y',Z_0,Z_0)$, which in turn is equivalent to $(X_1,X_1,Z_X,Z_Y)\prec
(X_1,X_1,Z_{X'},Z_{Y'})$. By cancellation this is equivalent to 
$(Z_X,Z_Y)\prec (Z_{X'},Z_{Y'})$, and by additivity of the the entropy on
$\Gamma_0\times \Gamma_0$, and by the definition of the entropies on $\Gamma$,  this
holds if and only if $S(X)+S(Y)\leq S(X')+S(Y')$. The additivity of the
entropy on $\Gamma\times \Gamma_0$, as well as on $\Gamma_1 \times \dots
\times  \Gamma_n$ is
shown in the same way.

\medskip
{\it STEP 4:}
To show that any additive entropy function $\tilde S $ on $\Gamma
\times \Gamma_0$ that satisfies the condition $\tilde S (X,Z)
=\tilde S(X)+  S(Z)$ necessarily coincides with $S(X) +S(Z) $ up to an
additive constant,
we start with \eqref{aa}, which implies $\tilde S(X) + S(Z_0) =\tilde S(X_1)
+ S(Z_X) $. However, $S(Z_X) =S(X)$, as we proved, and, therefore, 
$\tilde S(X) = S(X) + (\tilde S (X_1) -S(Z_0)$, as required. 
\end{proof}

Since the comparison property (B2) is highly nontrivial and cannot be expected to hold generally for non-equilibrium states, as we discussed in \cite{LY5}, it is important to know what can be said without it. If (B2) does not hold the functions $S_\pm$ defined in Eqs. \eqref{entropydef} and \eqref{entropydef2} will generally depend in
a non-trivial way on the choice of the reference points, and they need not be additive. They will, nevertheless, share some useful properties with the functions defined by \eqref{14} and \eqref{15}. The following Proposition is the analogue of Proposition 3.1 in \cite{LY5}:

\begin{prop}
The functions $S_\pm$ defined in Eqs.\ \eqref{entropydef}, \eqref{entropydef2} have the following properties, which do not depend on (B2):
\begin{itemize}
\item[(1)] $X\prec Y$ implies $S_-(X)\leq S_-(Y)$ and $S_+(X)\leq S_+(Y)$.
\item[(2)] If $S_+(X)\leq S_-(Y)$ then $X\prec Y$.
\item[(3)] If we take $(X_1,X_1)\in \Gamma\times\Gamma$ and $Z_0\times Z_0\in\Gamma_0\times \Gamma_0$ as reference points for defining $S_\pm$ on $\Gamma\times \Gamma$ with $\Gamma_0\times \Gamma_0$ as entropy meter, then  $S_-$ is superadditive and $S_+$ is subadditive under composition, i.e.,
\beq S_-(X)+S_-(Y)\leq S_-(X,Y)\leq S_+(X,Y)\leq S_+(X)+S_+(Y).\eeq
\item[(4)] 
If we take $(X_1,Z_0)$ and $Z_0$ as reference points
for the definitions of $S_\pm$ on $\Gamma\times \Gamma_0$, with $\Gamma_0$
as entropy meter,  then the functions $S_\pm$ on this space satisfy
\beq \label{15a}
S_\pm(X,Z_0)=S_\pm(X)\quad and \quad S_\pm(X_1,Z)=S(Z).
\eeq
If $\hat S$ is any other monotone function with respect to the relation $\prec$ on $\Gamma \times \Gamma_0$, such
that $\hat S  (X_1,Z)=S(Z)$, 
then
\beq
S_-(X) \leq \hat S(X,\,Z_0) \leq S_+(X) \quad { for \ all \ } X\in \Gamma.
\eeq
\end{itemize}
\end{prop}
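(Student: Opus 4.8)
The plan is to derive each of (1)--(4) by unwinding the defining $\sup$ and $\inf$ in \eqref{minus}--\eqref{plus} using only (A2), (A3), the cancellation law, the normality of $\Gamma_0$ (its entropy $S$ is additive and characterizes $\prec$), and --- for (2) and (3) --- the attainment of the $\sup$ and $\inf$; this mirrors the proof of Proposition~3.1 in \cite{LY5}, and the point is that none of these ingredients uses (B2). I would record once at the start that composition of states is commutative and associative up to $\sima$, and that the attainment is exactly the content of STEP~1 of the proof of Theorem~\ref{main}, whose argument invokes only (A6), (B1) and the normality of $\Gamma_0$. For (1): if $X\prec Y$ then $(X,Z_0)\prec(Y,Z_0)$ by (A3), so by (A2) every $Z'$ with $(X_1,Z')\prec(X,Z_0)$ also satisfies $(X_1,Z')\prec(Y,Z_0)$, and every $Z''$ with $(Y,Z_0)\prec(X_1,Z'')$ also satisfies $(X,Z_0)\prec(X_1,Z'')$; the two monotonicity statements follow by comparing the two families.

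For (2), I would take the attained extremizers $Z_X''$ (with $(X,Z_0)\prec(X_1,Z_X'')$ and $S(Z_X'')=S_+(X)$) and $Z_Y'$ (with $(X_1,Z_Y')\prec(Y,Z_0)$ and $S(Z_Y')=S_-(Y)$). From $S(Z_X'')=S_+(X)\le S_-(Y)=S(Z_Y')$ and the fact that $S$ characterizes $\prec$ on $\Gamma_0$ we get $Z_X''\prec Z_Y'$, hence $(X_1,Z_X'')\prec(X_1,Z_Y')$ by (A3); chaining through (A2) gives $(X,Z_0)\prec(Y,Z_0)$, and cancelling the common $Z_0$ gives $X\prec Y$. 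For (3): with maximizers $Z_X',Z_Y'\in\Gamma_0$ for $X,Y$, (A3) together with commutativity gives $(X_1,X_1,Z_X',Z_Y')\prec(X,Y,Z_0,Z_0)$, so $(Z_X',Z_Y')$ competes in the definition of $S_-(X,Y)$ with entropy meter $\Gamma_0\times\Gamma_0$, whence $S_-(X)+S_-(Y)=S(Z_X')+S(Z_Y')\le S_-(X,Y)$; the bound $S_+(X,Y)\le S_+(X)+S_+(Y)$ is the mirror image using minimizers; and the middle inequality $S_-(X,Y)\le S_+(X,Y)$ holds exactly as for one system, because any $(W_1,W_2)$ admissible for the $\sup$ and $(V_1,V_2)$ admissible for the $\inf$ satisfy $(X_1,X_1,W_1,W_2)\prec(X,Y,Z_0,Z_0)\prec(X_1,X_1,V_1,V_2)$, so $(W_1,W_2)\prec(V_1,V_2)$ after two cancellations, and then $S(W_1)+S(W_2)\le S(V_1)+S(V_2)$ by additivity of the entropy on $\Gamma_0\times\Gamma_0$ and its characterization of $\prec$.

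For (4): with reference points $(X_1,Z_0)\in\Gamma\times\Gamma_0$ and $Z_0\in\Gamma_0$, the defining condition for $S_-(X,Z)$ is $(X_1,Z_0,W)\prec(X,Z,Z_0)$. Setting $Z=Z_0$ and cancelling one copy of $Z_0$ reduces this to $(X_1,W)\prec(X,Z_0)$, so $S_\pm(X,Z_0)=S_\pm(X)$; setting $X=X_1$ and cancelling the block $(X_1,Z_0)$ reduces it to $W\prec Z$, and since $S$ characterizes $\prec$ on $\Gamma_0$ this gives $S_-(X_1,Z)=\sup\{S(W):S(W)\le S(Z)\}=S(Z)$ and, dually, $S_+(X_1,Z)=S(Z)$. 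For the sandwiching, let $\hat S$ be monotone on $\Gamma\times\Gamma_0$ with $\hat S(X_1,Z)=S(Z)$: any $Z'$ with $(X_1,Z')\prec(X,Z_0)$ gives $S(Z')=\hat S(X_1,Z')\le\hat S(X,Z_0)$, and taking the $\sup$ over such $Z'$ yields $S_-(X)\le\hat S(X,Z_0)$; any $Z''$ with $(X,Z_0)\prec(X_1,Z'')$ gives $\hat S(X,Z_0)\le\hat S(X_1,Z'')=S(Z'')$, and taking the $\inf$ yields $\hat S(X,Z_0)\le S_+(X)$.

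The only non-routine ingredient is the attainment of the extremizers, needed in (2) and for the clean form of (3); I would simply cite STEP~1 of the proof of Theorem~\ref{main} after remarking that that argument does not use (B2), and note that one can alternatively avoid it by running (2) and the superadditivity in (3) with $\eps$-optimal states together with the stability padding $(Z'_\eps,\delta(\eps)Z_1)\sima(Z''_\eps,\delta(\eps)Z_2)$ (fixed $Z_1,Z_2\in\Gamma_0$ with $S(Z_1)>S(Z_2)$, $\delta(\eps)\to0$) and (A6), exactly as there. Beyond that, the only care required is bookkeeping: nonemptiness of every $\sup$/$\inf$ family (from (B1), propagated to products by (A3)) and the tacit commutativity and associativity of composition up to $\sima$; and I would read the equation references in the statement as the entropy-meter definitions \eqref{minus}--\eqref{plus}.
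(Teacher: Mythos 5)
Your proposal is correct and follows essentially the same route as the paper's proof: unwinding the definitions \eqref{minus}--\eqref{plus}, invoking the extremizers $Z_X'$, $Z_X''$ from Step~1 of Theorem~\ref{main} (whose construction indeed uses only (A6), (B1) and normality of $\Gamma_0$, not (B2)), and using (A2), (A3), cancellation and the additivity/characterization property of $S$ on $\Gamma_0$ and $\Gamma_0\times\Gamma_0$; you also correctly read the statement's equation references as \eqref{minus}--\eqref{plus}. Your minor variations (set-inclusion instead of the maximizer in (1) and (4), and the explicit proof of $S_-(X,Y)\le S_+(X,Y)$ in (3)) are harmless refinements of the same argument.
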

\begin{proof}
PART (1). If $X\prec Y$ then, by the definition of $Z_X'$ (cf. Step 1 of the proof of Theorem 1) we have $S_-(X)=S(Z_X')$ 
and $(X_1,Z_X')\prec (X,Z_0)\prec (Y,Z_0)$. By the definition of $S_-(Y)$ this implies $S_-(X)=S(Z_X')\leq S_-(Y)$. In the same way one proves $S_+(X)\leq S_+(Y)$ by using the property of $Z_Y''$.

PART (2). If $S_+(X)\leq S_-(Y)$, then $S(Z_X'')\leq S(Z_Y')$ which implies $Z_X''\prec Z_Y'$.  Hence
$(X,Z_0)\prec (X_1, Z_X'')\prec  (X_1, Z_Y')\prec (Y,Z_0)$, and thus $X\prec Y$,  by cancellation.

PART (3). We have $(X_1, X_1, Z_X', Z_Y')\prec (X,Y, Z_0,Z_0)\prec (X_1, X_1, Z_X'', Z_Y'')$. By the definition of $S_\pm$ on $\Gamma\times \Gamma$, this implies $S( Z_X', Z_Y')\leq S_-(X,Y)\leq S_+(X,Y)\leq S( Z_X'', Z_Y'')$ and the statement follows from the additivity of $S$ on $\Gamma_0\times\Gamma_0$.

PART (4).  By definition,
\begin{equation}
 S_-(X,Z) = \sup\{S(Z)  : (X_1, Z_0,  Z') \prec (X_1,  Z_1,
Z_0)\}
= \sup \{S(Z') : (X_1,Z') \prec (X, Z) \},
\end{equation}
where the cancellation property has been used for the last equality. In
the same way,
\begin{equation}
 S_+ (X,Z) = \inf \{ S(Z'')\, :\, (X,Z)\prec (X_1,Z'')\}.
\end{equation}
This immediately implies \eqref{15a}. 

Now let $\hat S$ be monotone on $\Gamma\times \Gamma_0$, with
$\hat S(X_1,Z)= S(Z)$.  We have $S_-(X) = S(Z_X')$ with 
$(X_1,Z_X')\prec (X,Z_0)$. Therefore, $S_-(X) =S(Z_X') = \hat S (X_1,
Z_X') \leq \hat S(X,Z_0 $.  

In the same way, $\hat S(X,Z_0) \leq S_+(X).$
\end{proof}

\section{Conclusions}
We have considered the question of defining entropy for states of systems
that do not have the usual property of scalability or of being
in equilibrium, especially the former. We do so in the context of our
earlier definitions of entropy via the relation of adiabatic accessibility,
without introducing heat or temperature as primary concepts. We make no 
reference to statistical mechanical definitions but only to
processes that are physically realizable -- in principle, at least.

Our tool is an 'entropy meter', consisting of a  normal
system for which entropy has been firmly established by our previous
analysis. By measuring the change in entropy of the meter when it
interacts with the system to be measured we can, in favorable cases, define
an unambiguous entropy function for states of the observed system. We find
that  the quantity so defined actually has the properties expected of
entropy,
namely that it characterizes the relation of adiabatic accessibility (i.e.,
one state is accessible from another if and only if  its entropy is
greater), and is additive under  composition of states. 

A central concept is {\it comparability of states}, which we proved
for equilibrium states of normal systems in our earlier work. This property 
cannot be expected to hold, generally,  for non-equilibrium states, as
discussed in \cite{LY5}. 
We can, however, always define two functions. $S_-$ and $S_+$ for systems,
which have some of the properties of entropy, and which delimit the range
of possible adiabatic processes,
but it is only for the favorable case $S_- = S_+$ that a true entropy can be
proved to exist -- as we do here under the condition that comparability
holds.

\bigskip

\noindent{\textbf{Acknowledgements.}\\
Work partially supported by  U.S. National Science Foundation (grants
PHY 0965859 and 1265118; E.H.L.), 
and the Austrian Science Fund
FWF (P-22929-N16; J.Y.). We thank the Erwin Schr\"odinger Institute of the
University of Vienna
for its hospitality and support.}

 \end{document}